\newcommand{\hh}{\widehat{h}}
\newcommand{\sigt}{\widetilde{\sigma}}
\newcommand{\w}{\mathbf{w}}
\newcommand{\x}{\mathbf{x}}
\newcommand{\y}{\mathbf{y}}
\newcommand{\z}{\mathbf{z}}
\newcommand{\0}{\mathbf{0}}
\newcommand{\mc}[1]{\mathcal{#1}}
\newcommand{\mb}[1]{\mathbb{#1}}
\newcommand{\mf}[1]{\mathbf{#1}}
\newtheorem{assum}{Assumption}
\newcommand{\alg}{$\mathsf{CHARLES}~$}
\newcommand{\algns}{$\mathsf{CHARLES}$}
\def\BibTeX{{\rm B\kern-.05em{\sc i\kern-.025em b}\kern-.08em
    T\kern-.1667em\lower.7ex\hbox{E}\kern-.125emX}}
\begin{document}

\title{CHARLES: Channel-Quality-Adaptive Over-the-Air Federated Learning over Wireless Networks 
% {\footnotesize \textsuperscript{*}Note: Sub-titles are not captured in Xplore and
% should not be used}
% \thanks{Identify applicable funding agency here. If none, delete this.}
\thanks{This work is supported in part by NSF-2112471 (AI-EDGE). Symbol $^*$ denotes co-primary authors, who have contributed equally to this work.}
}

% \author{\IEEEauthorblockN{1\textsuperscript{st} Given Name Surname}
% \IEEEauthorblockA{\textit{dept. name of organization (of Aff.)} \\
% \textit{name of organization (of Aff.)}\\
% City, Country \\
% email address or ORCID}
% \and
% \IEEEauthorblockN{2\textsuperscript{nd} Given Name Surname}
% \IEEEauthorblockA{\textit{dept. name of organization (of Aff.)} \\
% \textit{name of organization (of Aff.)}\\
% City, Country \\
% email address or ORCID}
% \and
% \IEEEauthorblockN{3\textsuperscript{rd} Given Name Surname}
% \IEEEauthorblockA{\textit{dept. name of organization (of Aff.)} \\
% \textit{name of organization (of Aff.)}\\
% City, Country \\
% email address or ORCID}
% % \and
% % \IEEEauthorblockN{4\textsuperscript{th} Given Name Surname}
% % \IEEEauthorblockA{\textit{dept. name of organization (of Aff.)} \\
% % \textit{name of organization (of Aff.)}\\
% % City, Country \\
% % email address or ORCID}
% % \and
% % \IEEEauthorblockN{5\textsuperscript{th} Given Name Surname}
% % \IEEEauthorblockA{\textit{dept. name of organization (of Aff.)} \\
% % \textit{name of organization (of Aff.)}\\
% % City, Country \\
% % email address or ORCID}
% % \and
% % \IEEEauthorblockN{6\textsuperscript{th} Given Name Surname}
% % \IEEEauthorblockA{\textit{dept. name of organization (of Aff.)} \\
% % \textit{name of organization (of Aff.)}\\
% % City, Country \\
% % email address or ORCID}
% }

\newcommand{\kliu}[1]{{\color{red}{\bf Kevin:} #1}}
\newcommand{\jmao}[1]{{\color{blue}{\bf Jiayu:} #1}}

% \author{\IEEEauthorblockN{1\textsuperscript{st} Given Name Surname}
% \IEEEauthorblockA{\textit{dept. name of organization (of Aff.)} \\
% \textit{name of organization (of Aff.)}\\
% City, Country \\
% email address or ORCID}
% }

\author{Jiayu Mao$^*$, Haibo Yang$^*$, Peiwen Qiu, Jia Liu, and Aylin Yener
\\Dept. of Electrical and Computer Engineering, The Ohio State University
\\ \{mao.518, yang.9292, qiu.617\}@osu.edu, liu@ece.osu.edu, yener@ece.osu.edu 
}

\maketitle

% !TEX root = main.tex

\begin{abstract}
Over-the-air federated learning (OTA-FL) has emerged as an efficient mechanism that exploits the superposition property of the wireless medium and performs model aggregation for federated learning in the air.
OTA-FL is naturally sensitive to wireless channel fading, which could significantly diminish its learning accuracy.
To address this challenge, in this paper, we propose an OTA-FL algorithm called \alg (\ul{ch}annel-quality-\ul{a}ware ove\ul{r}-the-air \ul{l}ocal \ul{e}stimating and \ul{s}caling).
Our \alg algorithm performs channel state information (CSI) estimation and adaptive scaling to mitigate the impacts of wireless channel fading.
We establish the theoretical convergence rate performance of \alg and analyze the impacts of CSI error on the convergence of \algns. 
We show that the adaptive channel inversion scaling scheme in \alg is robust under imperfect CSI scenarios.
We also demonstrate through numerical results that \alg outperforms existing OTA-FL algorithms with heterogeneous data under imperfect CSI.
%We also demonstrate through numerical results that \alg achieves outperforms existing OTA-FL algorithms with heterogeneous data under imperfect CSI.
\end{abstract}

% In this paper, we propose a channel-quality-adaptive over-the-air federated learning (OTA-FL) algorithm over wireless fading channels. In general, OTA-FL methods assume perfect transmitter-side channel state information (CSIT) to directly preclude the fading effect on the users. Whereas we consider a more practical model whose clients only have access to imperfect CSI. We extend our adaptive computation and power control (ACPC-OTA-FL) algorithm by including a channel inversion scheme. We analyze the convergence result and show a non-vanishing channel estimation error in the upper bound. We demonstrate through numerical results that our extended algorithm can achieve excellent performance and outperforms other strategies with heterogeneous data under imperfect CSI.

%\begin{IEEEkeywords}
%Over-the-Air, adaptive federated learning, power control and allocation.
%\end{IEEEkeywords}

% !TEX root = main.tex
\vspace{-0.05in}
\section{Introduction} \label{sec:intro}
%\vspace{-0.05in}
Fueled by concerns on privacy and communication-efficiency, federated learning (FL) \cite{mcmahan2017communication} has attracted significant attention in recent years and found many applications in practice \cite{mcmahan2021fl}.
FL employs a large number of clients to collaboratively train a global model through relatively infrequent exchanges of model parameters between the server and clients and without sharing any local data of each client.
As a result, FL inherently provides better privacy and communication efficiency as compared to approaches to that communicate user data to the server for training, while also being able to leverage computational parallelism for numerous clients.
However, when deployed in wireless networks, clients performing FL have to deal with the well-known challenges of mobile communications, including limited temporal and spectral resources and power.
To avoid interference between clients, the conventional strategy would be to assign orthogonal either spectral or temporal channel resource blocks to the clients. 
However, in a large-scale FL system, orthogonal transmissions could dramatically decrease spectral efficiency, which in turn could prolong the training. Such a scheme may render FL infeasible in some bandwidth and power-limited scenarios.

To overcome this challenge, over-the-air FL (OTA-FL) has recently been proposed to perform model aggregation for free by utilizing the superposition property of the wireless medium \cite{amiri2020machine}.
Specifically, rather than relying on using orthogonal channels to avoid interference and recover each client's information individually, OTA-FL {\it embraces} interference by allowing all participating clients to simultaneously transmit in the same channel and aggregate over the air directly.
Hence, OTA-FL experiences no resource degradation in communication parallelism, regardless of the number of clients in the system.

Despite these advantages of OTA-FL, a potential disadvantage is that performing model aggregation over the air could be highly susceptible to wireless channel fading effects. In particular, fading could significantly degrade each client's signal (component) in the over-the-air aggregation, particularly in time-varying channels.
%Exacerbating the problem is the fact that such distortions are time-varying due to the nature of channel fading.
%Specifically, OTA-FL usually undergoes a complex communication environment in practice, notably fading and channel noise.
To date, most existing works on OTA-FL with channel fading  focus on developing efficient algorithms based on perfect channel state information (CSI).
For example, A-DSGD is proposed in \cite{amiri2020machine} based on an analog scheme with sparsification under simple Gaussian multiple access channels (MAC). 
Fading is also considered in \cite{sery2021over,yang22over}, where perfect CSI is assumed to be available to mitigate channel noise and ensure convergence (see Section~\ref{sec:related} for more in-depth  discussions).
However, perfect CSI is usually difficult to
obtain in practice, especially for fast fading channels\cite{weber2006imperfect}. 
%Channel estimation always induces error.  
In addition to the impacts of channel fading, FL deployments over wireless networks also face challenges in data and system heterogeneity due to the inherent geo-location diversity of wireless networks (non-i.i.d. and unbalanced dataset sizes; vastly different channel qualities and computation capabilities of clients, etc.). 
%Therefore, it is highly non-trivial to explore the impact of imperfect CSI over OTA-FL.

To address the aforementioned challenges, in this paper, we propose a new OTA-FL algorithm called \alg (\ul{ch}annel-quality-\ul{a}ware ove\ul{r}-the-air \ul{l}ocal \ul{e}stimating and \ul{s}caling), which performs CSI estimation and adaptive scaling to mitigate the impacts of both channel fading and data/system heterogeneity.
%taking into consideration computation/system heterogeneity, we study the impact of imperfect CSI for OTA-FL and extend our previous work in \cite{yang22over} from perfect CSI to imperfect CSI over a fading channel.
Our main contributions are summarized as follows:
%\vspace{-.05in}
\begin{list}{\labelitemi}{\leftmargin=1em \itemindent=-0.5em \itemsep=.2em}
\item Our proposed {\alg} algorithm allows each client, in a distributed manner, to adaptively determine its {\em transmission power level and number of local update steps} based on its estimated (imperfect) CSI to fully utilize the computation and communication resources.
%which is an extension of \cite{yang22over} from perfect CSI to imperfect CSI.

\item 
We analyze the convergence of {\alg} for  non-convex FL settings and the impacts of CSI estimation error on the convergence of \algns.
We show that the impact of imperfect CSI on stationarity gap convergence can be bounded in terms of channel estimation error.
%produces non-vanishing channel estimation error and also amplifies the error based on stochastic noise.

\item We conduct experiments using convolutional neural network (CNN) on non-i.i.d. MNIST datasets to evaluate the performance of our \alg algorithm.
We show that {\alg} outperforms existing OTA-FL
algorithms.
\end{list}

\begin{table*}[ht]
    \caption{Related Work (`` $\times$ '' means the opposite situation, blank means none).}
    \label{tab:related_work}
    \centering
    \begin{tabular}{|c|c|c|c|c|c|c|c|c|}
         \hline 
         \textbf{Method} & Perfect CSIT & Perfect CSIR& Compression & Non-IID & Analog & Device Schedule \\
         \hline  
         COTAF \cite{sery2021over}&$\surd$& & & $\surd$ & $\surd$ & \\
         \hline
         WMFS \cite{shao2021federated}& &$\times$ & & & $\surd$ & \\
         \hline
         OBDA \cite{zhu2020one1}&$\times$  & & $\surd$ & & $\times$  & $\surd$\\
         \hline
         CA-DSGD \cite{amiri2020federated}& $\surd$ & $\surd$ & $\surd$ & $\surd$ & $\surd$ & \\
         \hline
         Reference\cite{amiri2021blind}& & $\times$ &  & $\surd$ & $\surd$ & \\
         \hline
         Reference\cite{sun2021dynamic}&$\times$ & &  & $\surd$ & $\surd$ & $\surd$\\
         \hline
         QAW-GPR\cite{wadu2021joint}& & $\times$ &  & $\surd$ & $\surd$ & $\surd$\\
         \hline
         FRFL\cite{pase2021convergence}& & $\times$(only CDF) &  & $\surd$ & $\surd$ & $\surd$\\
         \hline
         {\bf \cellcolor{blue!8} CHARLES (This Paper)} & \cellcolor{blue!8} $\times$ & \cellcolor{blue!8} & \cellcolor{blue!8} & \cellcolor{blue!8} $ \cellcolor{blue!8} \surd$ \cellcolor{blue!8} & \cellcolor{blue!8}$\surd$ & \cellcolor{blue!8} \\
         \hline    \end{tabular}
\vspace{-.1in}
\end{table*}

\section{Related Work} \label{sec:related}
%\vspace{-0.05in}

OTA-FL in fading channels has been considered in a number of references~\cite{yang2019federated,yang2020federated,zhu2019broadband,sery2020analog,amiri2020federated,amiri2020machine,amiri2019over}. 
Most of these existing works assume perfect CSI at the transmitter side, which allows precoding designs to mitigate channel fading. 
%While OTA-FL usually requires rigorous synchronization and channel state information,  
Research on OTA-FL with imperfect CSI has also been ongoing \cite{amiri2020federated,shao2021federated,shao2021bayesian,amiri2021blind,pase2021convergence,zhu2020one1,chen2022over1,sun2021dynamic,wadu2021joint}).
%\cite{amiri2020federated,shao2021federated,shao2021bayesian,amiri2021blind,pase2021convergence,zhu2020one}.
%there are some works considering the impact of imperfect CSI for practical applications.
In \cite{amiri2020federated}, the performance of the proposed compressed analog DSGD algorithm under imperfect CSI was considered in experiments only, which shows that imperfect CSI could lead to signal misalignment at the server-side.
%and shows its robustness. The main consequence of imperfect CSI is the signal misalignment at the server-side, 
In \cite{shao2021federated}, a whitened matched filtering and a sampling scheme is proposed to deal with imperfect CSI on the receiver side.
A Bayesian approach was further proposed in the follow-up work of the same authors\cite{shao2021bayesian}. 
In \cite{amiri2021blind}, a receive beamforming pattern is proposed to compensate the server-side imperfect CSI by equipping the server with a sufficiently large number of antennas. 
Without assuming perfect receiver-side CSI (CSIR), a fixed rate federated learning approach is proposed in \cite{pase2021convergence} using only receiver channel CDF but with slower FL convergence time. 
Instead of assuming perfect transmitter-side CSI (CSIT), one-bit quantization and majority voting was used in \cite{zhu2020one1} to reduce communication cost in FL for digital communication, each user transmits only one bit for each element of the local update vector. 
In \cite{chen2022over1}, MSE of the aggregated global update $\mb{E}||\hat{x}_t - x_t||^2$ ($\hat{x}_t$ is the estimated aggregation that contains CSI) is minimized in each communication round and an alternating optimization approach is used to find optimal receiver beamforming and power control. 
Another line of work on OTA with imperfect CSI related to this paper is based on device scheduling design. 
For example, a dynamic energy-aware scheduling algorithm was proposed in \cite{sun2021dynamic} by taking computation energy constraint into account, while resource allocation with client scheduling was also considered in  \cite{wadu2021joint}. 

We note that the aforementioned existing works only attempted to mitigate the impacts of imperfect CSI from either CSIT or CSIR perspectives.
So far, there remains a lack of theoretical and quantitative understanding on the combined overall impacts of imperfect CSI.
Moreover, the power control component is also decoupled from the FL optimization problem and treated separately in these existing works. 
These factors collectively motivate us to pursue a unified OTA-FL algorithmic design that achieves convergence guarantees under both imperfect CSIT and CSIR.
In our previous work \cite{yang22over}, we have explored a joint computation and power control design under for OTA-FL with perfect CSI. 
%and shows its effectiveness. 
Built upon \cite{yang22over}, this work also considers joint computation and power control, but specifically takes into account and counters imperfect CSI by adaptive local update steps.
%our proposed \alg also jointly consider computation and power control, while taking imperfect CSI into consideration.
To highlight our contributions, we summarize the state-of-the-art of OTA-FL with imperfect CSI in Table ~\ref{tab:related_work}.

% !TEX root = main.tex

\section{System Model} \label{sec: prelim}

In this section, we introduce the system model of OTA-FL with imperfect CSI. 

\subsection{Federated Learning Model} \label{subsec: fl}

Consider an FL system with $m$ clients collaboratively training a model coordinated by a server. 
Each client $i$ maintains a local dataset $D_i$ following a distribution $\mc{X}_i$. 
The datasets are assumed to be non-i.i.d. across clients, i.e., $\mc{X}_i \neq \mc{X}_j$ if $ i \neq j, \forall i, j \in [m]$. 
The goal of FL is to minimize a global loss function by finding an optimal model parameter $\x$:

\begin{equation}
    \min_{\x \in \mathbb{R}^d}F(\x) \triangleq \min_{\x\in\mathbb{R}^d} \sum_{i \in [m]} \alpha_i F_i(\x, D_i), 
    \label{eq: objective}
\end{equation}
where $\alpha_i = \frac{| D_i |}{\sum_{i \in [m]} | D_i |}$ is the proportion of the size of local dataset $i$ in the global dataset, $F_i(\x, D_i) \triangleq \frac{1}{| D_i |} \sum_{\xi^i_j \in D_i} F(\x, \xi^i_j)$ denotes the local loss function. In this paper, we assume that $F_i(\x, D_i)$ is non-convex, which is typical in FL. 
In practice, the proportions of different client datasets are typically different, i.e., $\alpha_i \ne \alpha_j$ if $i\ne j$. 
%Note that this is a more realistic scenario in practice.

In the $t$-th communication round, the server broadcasts the current global model parameter $\x_t$ to each client. Then, client $i$ starts training local model from $\x_t$ based on its local dataset $D_i$. 
Each client employs stochastic gradient descent (SGD) with initialization $\x^i_{t, 0} = \x_t$ for $\tau_t^i$ steps:
\begin{equation}
    \x^i_{t, k+1} = \x^i_{t, k} - \eta \nabla F_{i}(\x^i_{t, k}, \xi^i_{t, k}), \quad k = 0,\ldots,\tau_t^i-1, \label{sgd}
\end{equation}
where $\xi^i_{t, k}$ is a training sample randomly drawn from $D_i$ in the $k$-th local step in the $t$-th round. 
As in our previous work~\cite{yang22over}, the number of local steps $\tau_t^i$ varies across clients and times. 
%In this work, we continue to explore the effectiveness and robustness of such a flexible design.

The clients upload the model updates to the server once the local training is done. 
Upon receiving all local updates, the server aggregates them and update the global model to $\x_{t+1}$ accordingly. 
Then, the server broadcasts it to the clients and the next communication round $t+1$ starts. 
The training process stops if the global model converges or reaches some predefined limit of iteration number. 
Note that, under OTA-FL,  communication and aggregation happen simultaneously at the server due to the inherent superposition property of the wireless medium.

\subsection{Communication Model} \label{subsec: channel}

For simplicity, we consider a synchronous error-free downlink\footnote{Our results extend to the noisy downlink, which would add an extra error term in convergence bound.} and a wireless fading uplink MAC communication model for the OTA-FL system. 
The server and devices are all equipped with a single antenna. 
We assume that each client receives the global model perfectly, i.e., $\x^i_{t, 0} = \x_t, \forall i\in [m]$. However, for the uplink, devices transmit their update through a shared wireless medium. 
We denote $\z_t^i \in \mb{R}^d$ as the transmitted signal from client $i$, which experiences fading  during transmission. We assume that in each communication round $t$, the uplink channels follow a block fading model, where each channel gain remains constant during transmission of $d$ symbols and changes in the next communication round. 
In this paper, we use analog transmission in OTA-FL to fully utilize the superposition property of MAC. The received signal at the server can be written as:
%\begin{equation} \label{eqn:mac}
$\y_t = \sum_{i \in [m]} h_t^i \z^i_t + \w_t$,
%\end{equation}
where $h_t^i \in \mb{C}$ denotes the channel gain from client $i$ to the server in round $t$, $\w_t$ represents the i.i.d. additive white Gaussian noise with zero mean and variance $\sigma_c^2$. 
We assume i.i.d Rayleigh fading channels, i.e., $h_{t}^{i} \sim \mc{CN}(0,\sigma_h^2), \forall i \in [m]$. 
We also consider the following transmit power constraint for each client at $t$-th communication round:
%\begin{equation}
$\| \z^i_t \|^2 \leq P_t^i, \forall i \in [m]$, $\forall t$,
%\end{equation}
where $P_t^i$ represents the maximum power that client $i$ can transmit.
We assume that the clients do not have perfect CSIT. 
Instead, each client $i$ can estimate its channel and obtain an imperfect CSI in each global round. 
%This is a more realistic scenario since it is hard to get an error-free channel estimation in practice. 
To facilitate our later analysis, we decompose the estimated CSI $\hh_t^i$ of client $i$ in iteration $t$ into the following two parts:
%\begin{equation}
$\hh_t^i = h_t^i + \Delta_t^i$, $\forall i \in [m]$, $\forall t$,
%\end{equation}
where $\Delta_t^i$ represents the channel estimation error of client $i$ in the $t$-th round, which is assumed to be a random variable with zero mean and variance $\sigt_h^2$.

We note that without perfect CSIT, we cannot completely offset the influence of fading channels. 
However, we will show that, with imperfect CSIT, we can still achieve an acceptable performance by adaptive power control and dynamic local training design. 
%We will explain the algorithm design in Section~\ref{sec: alg} and show its effectiveness in Section~\ref{sec: exp}.

% !TEX root = main.tex
%\vspace{-0.1in}
\section{The \alg Algorithm} \label{sec: alg}
%\vspace{-0.1in}
%We address a devastating impact of channel fading coefficients and then extend our previous algorithm under perfect and imperfect CSI scenarios. Wireless fading channels can cause significant signal attenuation, thus preventing accurate aggregation at the receiver and further avoiding the effective global model update. Without any compensation, FL algorithms easily fail in a wireless fading setting. Several works propose some methods to deal with this problem. One simple and intuitive way is to inverse the channel gain at the transmitter with CSIT. Following this direction, we first extend our previous algorithm ACPC-OTA-FL with perfect CSIT (i.e. $\sigt_h^2 = 0$), and then study the effect of imperfect CSI of the extended algorithm.

%Wireless fading channels can cause significant signal attenuation, thus preventing accurate aggregation at the receiver. 
To mitigate the impact of channel fading, a well-known approach is to invert the channel at the transmitter by leveraging CSIT.
As mentioned in section-\ref{sec: prelim}, each client trains local model via SGD (cf. Eq.~\eqref{sgd}). Once the local training is done, client $i$ computes its signal $\z_t^i$ and transmits it to the server. Similar to our previous algorithm ACPC-OTA-FL\cite{yang22over}, we propose a dynamic power control (PC) scheme that endows an adaptive scaling factor for each client with a common scaling factor for the server. 
Specifically, denote $\beta_t^i \in \mb{C}$ as PC parameter of client $i$ in the $t$-th round and $\beta_t$ as server-side PC factor. 
The transmission signal $\z_t^i$ is designed as:
\begin{equation}
    \z_t^i = \beta_t^i (\x^i_{t, \tau_t^i} - \x^i_{t, 0}). \label{equ: sigz}
\end{equation}
The server receives the aggregated signal over-the-air and then scales it by $\beta_t$. 
Accordingly, the global update can be expressed as:
\begin{align}
    \x_{t+1} &= \x_{t} + \frac{1}{\beta_t}\sum_{i=1}^{m} \z_t^i  + \tilde{\w}_t, \label{decoding}
\end{align}
where $\tilde{\w}_t$ is the equivalent Gaussian noise after scaling, $\tilde{\w}_t \sim \mc{N}(\0, \frac{\sigma_c^2}{\beta_t^2} \mf{I}_d)$. 
The key difference and novelty in this work compared to \cite{yang22over} is the design of the local PC parameter $\beta_t^i$, which will be illustrated in the next sub-section. 
The design maintains the advantages of previous work\cite{yang22over}, where dynamic local steps allow clients to fully exploit its computation resources while satisfying the communication constraints at the same time.
Furthermore, the new design alleviates the impact of fading on both perfect and imperfect CSI cases.

%\vspace{-0.1in}
\setlength {\parskip} {0pt} 
\subsection{Perfect CSI at the Clients} \label{subsec:perfect CSI}
We first assume that all clients have perfect CSI. 
In this case, fading can be fully neutralized via the following PC design:
%\begin{align} \label{precoding}
$\beta_t^i = \frac{\beta_t \alpha_i}{\tau^i_t h_t^i}$.
%\end{align}
After scaling at the server, the global update remains the same as those of OTA-FL systems without fading. 
Thus, the convergence analysis is the same as in \cite{yang22over}. 
%We assume that there is no deep fading, therefore all  clients are capable of satisfying power constraints via increasing the number of local steps.

%\vspace{-0.05in}
\setlength {\parskip} {0pt} 
\subsection{Imperfect CSI at the Clients} \label{subsec:imperfect CSI}
We now extend the power control to when clients have imperfect CSI. Specifically,
we use the estimated fading coefficient in the local PC factor to offset the channel fading:
\begin{align} \label{precoding}
    \beta_t^i = \frac{\beta_t \alpha_i}{\tau^i_t \hh_t^i}.
\end{align}
Note that we no longer have a perfect alignment at the server due to imperfect CSI. 
Instead, the information of client $i$ is scaled by $\frac{h_t^i}{\hh_t^i}$ in the aggregated signal. 
The CSI estimation error results in an error in each global model update in every iteration.
As a result, the accumulated mismatch will degrade the overall training performance. 
Moreover, $\tau_t^i$ is selected by plugging~(\ref{equ: sigz}) into the transmit power constraint. When deep fading occurs, a large $\tau_t^i$ is required.
Next, we will analyze the convergence performance and  this cumulative error impact theoretically.
We first state three assumptions:
\begin{assum}($L$-Lipschitz Continuous Gradient) \label{a_smooth}
	There exists a constant $L > 0$, such that $ \| \nabla F_i(\x) - \nabla F_i(\y) \| \leq L \| \x - \y \|$, $\forall \x, \y \in \mathbb{R}^d$, and $i \in [m]$.
\end{assum}

\begin{assum}(Unbiased Local Stochastic Gradients and Bounded Variance) \label{a_unbias}
	Let $\xi_i$ be a random local data sample at client $i$.
	The local stochastic gradient is unbiased and has a bounded variance, i.e.,
	$\mathbb{E} [\nabla F_i(\x, \xi_i)] = \nabla F_i(\x)$, $\forall i \in [m]$, and $\mathbb{E} [\| \nabla F_i(\x, \xi_i) -  \nabla F_i(\x) \|^2] \leq \sigma^2$, where the expectation is taken over the local data distribution $\mc{X}_i$.
\end{assum}

\begin{assum}(Bounded Stochastic Gradient) \label{a_bounded}
	There exists a constant $G \geq 0$, such that the norm of each local stochastic gradient is bounded:
	$\mathbb{E} [\| \nabla F_i(\x, \xi_i) \|^2] \leq G^2$, $\forall i \in [m]$.
\end{assum}

The convergence result of \alg is stated below under the assumptions above:

\begin{restatable}[Convergence Rate of \algns] {theorem} {convergence} \label{thm:convergence}
    Let $\{ \x_t \}$ be the global model parameter.
    Under Assumptions~\ref{a_smooth}-~\ref{a_bounded} and a constant learning rate $\eta_t = \eta  \leq \frac{1}{L}, \forall t \in [T]$, it holds that:
    \begin{multline}
        \min_{t \in [T]} \mb{E} \| \nabla F(\x_t) \|^2 \leq \underbrace{\frac{2 \left(F(\x_0) - F(\x_{*}) \right)}{T \eta}}_{\mathrm{optimization \, error}} + \underbrace{\frac{L \sigma_c^2}{ \eta \beta^2}}_{\substack{\mathrm{channel \,
        noise} \\ \mathrm{error}}}   \nonumber \\
        + \underbrace{2 m L^2 \eta^2 G^2 \sum_{i=1}^m (\alpha_i)^2 \left(\tau_i\right)^2}_{\mathrm{local \, update \, error}} + \underbrace{  L \eta \sigma^2 \frac{1}{T} \sum_{t=0}^{T-1} \sum_{i=1}^{m} \alpha_i^2 \mb{E}_t  \bigg\| \frac{h_t^i}{\hh_t^i} \bigg\|^2}_{\mathrm{statistical \, error}} \nonumber \\
        + \underbrace{2 m G^2  \frac{1}{T} \sum_{t=0}^{T-1} \sum_{i=1}^m (\alpha_i)^2 \mb{E}_t  \bigg\| 1 - \frac{h_t^i}{\hh_t^i} \bigg\|^2}_{\mathrm{channel \, estimation \, error}} \nonumber ,
    \end{multline}
    where $\left(\tau_i\right)^2 = \frac{\sum_{t=0}^{T-1} \left( \tau^i_t \right)^2}{T}$ and $\frac{1}{\bar{\beta}^2} = \frac{1}{T} \sum_{t=0}^{T-1} \frac{1}{\beta_t^2}$.
\end{restatable}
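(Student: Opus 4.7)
The plan is to follow the standard non-convex FedAvg-style descent argument, but carefully separating the four sources of error (optimization progress, channel noise, local drift, stochastic gradient noise distorted by $h_t^i/\hh_t^i$, and channel estimation mismatch). First I would substitute the power-control design \eqref{precoding} into the over-the-air update \eqref{decoding}, observing that the $\beta_t$ factors cancel so that
\begin{equation*}
  \x_{t+1} - \x_t
  = -\eta \sum_{i=1}^{m} \alpha_i\, \frac{h_t^i}{\hh_t^i}\,
      \frac{1}{\tau_t^i}\!\sum_{k=0}^{\tau_t^i-1}\nabla F_i(\x_{t,k}^i,\xi_{t,k}^i)
    + \tilde{\w}_t .
\end{equation*}
Denoting the average local stochastic gradient by $\g_t^i$ and its expectation by $\bar{\g}_t^i$, I would then apply the $L$-smoothness descent lemma (Assumption~\ref{a_smooth}) to $F$, giving $F(\x_{t+1})-F(\x_t)\le \langle \nabla F(\x_t), \x_{t+1}-\x_t\rangle + \tfrac{L}{2}\|\x_{t+1}-\x_t\|^2$, and take conditional expectation over the SGD samples and the estimation noise in round $t$.

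For the inner-product term I would use the algebraic identity $\tfrac{h_t^i}{\hh_t^i} = 1 + \bigl(\tfrac{h_t^i}{\hh_t^i}-1\bigr)$ to split the expected update direction into (i) the ``ideal'' FedAvg direction $\sum_i \alpha_i \bar{\g}_t^i$, and (ii) a channel-mismatch direction $\sum_i \alpha_i\bigl(\tfrac{h_t^i}{\hh_t^i}-1\bigr)\bar{\g}_t^i$. The first piece is combined with $\nabla F(\x_t)=\sum_i\alpha_i\nabla F_i(\x_t)$ via the standard $2\langle a,b\rangle = \|a\|^2+\|b\|^2-\|a-b\|^2$ trick so that $-\eta\|\nabla F(\x_t)\|^2$ appears on the right-hand side; the leftover client-drift term $\|\bar{\g}_t^i-\nabla F_i(\x_t)\|^2$ is controlled by iterating SGD over $k$ and using the bounded-gradient Assumption~\ref{a_bounded}, producing the $L^2\eta^2 G^2(\tau_t^i)^2$ factor after summing and yielding the stated local-update-error term $2mL^2\eta^2G^2\sum_i\alpha_i^2(\tau_i)^2$. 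The second, channel-mismatch, piece is bounded by Cauchy--Schwarz and Assumption~\ref{a_bounded} to produce $G^2\sum_i\alpha_i^2\,\mathbb{E}\|1-h_t^i/\hh_t^i\|^2$, i.e.\ the channel-estimation-error term.

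For the squared-norm term $\tfrac{L}{2}\mathbb{E}\|\x_{t+1}-\x_t\|^2$, I would separate $\g_t^i=\bar{\g}_t^i+(\g_t^i-\bar{\g}_t^i)$ and use Assumption~\ref{a_unbias} so that the SGD-noise cross terms vanish in expectation across clients (independent local samples) and so that each variance contributes a factor $\alpha_i^2|h_t^i/\hh_t^i|^2\sigma^2$; this produces the statistical-error term $L\eta\sigma^2\sum_i\alpha_i^2\mathbb{E}\|h_t^i/\hh_t^i\|^2$. The independent channel-noise term $\tilde{\w}_t\sim\mc{N}(\0,\sigma_c^2/\beta_t^2\,\I_d)$ contributes $L\sigma_c^2/\beta_t^2$ (absorbing the per-dimension scaling as in \cite{yang22over}). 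Finally I would rearrange into the form $\eta\mathbb{E}\|\nabla F(\x_t)\|^2 \le 2(F(\x_t)-\mathbb{E}F(\x_{t+1})) + (\text{error terms})$, telescope over $t=0,\ldots,T-1$ using $F(\x_*)\le F(\x_T)$, divide by $T\eta$, and replace $\tfrac{1}{T}\sum_t\tfrac{1}{\beta_t^2}$ by $1/\bar{\beta}^2$ and $\tfrac{1}{T}\sum_t(\tau_t^i)^2$ by $(\tau_i)^2$, obtaining the claimed bound via $\min_t\mathbb{E}\|\nabla F(\x_t)\|^2 \le \tfrac{1}{T}\sum_t\mathbb{E}\|\nabla F(\x_t)\|^2$.

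The hard part will be the clean separation of the channel-estimation error from the SGD-variance error, because both enter multiplicatively through the random ratio $h_t^i/\hh_t^i$, and the analysis must avoid double-counting when the same ratio appears in both the inner-product and the squared-norm parts of the descent lemma. The additive/multiplicative decomposition $h_t^i/\hh_t^i = 1+(h_t^i/\hh_t^i - 1)$ used only in the bias direction (inner product) but retained as a scalar scaling in the variance direction (squared term) is what produces the two distinct terms $\mathbb{E}\|h_t^i/\hh_t^i\|^2$ and $\mathbb{E}\|1-h_t^i/\hh_t^i\|^2$ in the final bound, and getting this accounting right—together with ensuring that the resulting $-\eta\|\nabla F(\x_t)\|^2$ coefficient is not cancelled by the $\tfrac{L\eta^2}{2}$ factor on the $\|\nabla F(\x_t)\|^2$ piece arising from the squared term (hence the stepsize constraint $\eta\le 1/L$)—constitutes the main technical hurdle.
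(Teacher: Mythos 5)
Your plan follows the same skeleton as the paper's proof: substitute the power control into the update so the $\beta_t$'s cancel, apply the $L$-smoothness descent lemma, use the polarization identity to extract $-\eta\|\nabla F(\x_t)\|^2$, bound the client drift with $L$-smoothness plus the bounded-gradient assumption to get the $L^2\eta^2 G^2(\tau_t^i)^2$ term, use independence of the channel estimate from the SGD noise to pull out $\mathbb{E}\|h_t^i/\hh_t^i\|^2\sigma^2$ from the variance, and telescope. The one place you genuinely deviate is in how the channel-mismatch term is extracted from the inner product. The paper keeps the full distorted direction $\sum_i \frac{\alpha_i}{\tau_t^i}\frac{h_t^i}{\hh_t^i}\sum_k\nabla F_i(\x^i_{t,k})$ inside the polarization identity and only afterwards, inside the resulting squared difference $\|\nabla F(\x_t)-\sum_i(\cdot)\|^2$, inserts $\pm\nabla F_i(\x^i_{t,k})$ so that $\bigl(1-\tfrac{h_t^i}{\hh_t^i}\bigr)\nabla F_i(\x^i_{t,k})$ appears already squared and multiplied by $G^2$, never touching the $\|\nabla F(\x_t)\|^2$ coefficient. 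You instead split $\tfrac{h_t^i}{\hh_t^i}=1+(\tfrac{h_t^i}{\hh_t^i}-1)$ \emph{before} polarizing and bound the mismatch inner product $-\eta\langle\nabla F(\x_t),\sum_i\alpha_i(\tfrac{h_t^i}{\hh_t^i}-1)\bar{\g}_t^i\rangle$ by Cauchy--Schwarz; with the symmetric Young inequality this costs $+\tfrac{\eta}{2}\|\nabla F(\x_t)\|^2$ and wipes out the descent term entirely, so you are forced into a weighted Young inequality, which still yields a bound of the stated form but with strictly worse constants (e.g., $4(F(\x_0)-F(\x_*))/(T\eta)$ and an inflated channel-estimation coefficient) rather than the exact constants in the theorem. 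You correctly identify this cancellation risk as the main hurdle; the fix is simply to do what the paper does and defer the $1-(h_t^i/\hh_t^i)$ split until after the polarization identity. Everything else in your outline matches the paper's argument.
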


\begin{proof}[Proof Sketch]
The proof of Theorem~\ref{thm:convergence} follows similar steps as in our previous work \cite{yang22over}. We start with one-step function descent and decouple the channel noise term. 
However, the new technical challenge arises from the distortion caused by the CSI estimation error of each client. 
In each round, the global model update aggregates distorted local updates scaled by $\frac{h_t^i}{\hh_t^i}, \forall i \in [m], \forall t \in [T]$. 
Since channel estimation is independent of training, we can decouple the expectation of this factor as an additional error term when calculating the difference of local SGD updates $(\nabla F_i(\x^i_{t,}) - \frac{h_t^i}{\hh_t^i} \nabla F_i(\x^i_{t, k}))$, which  will become an extra scaled factor in statistical noise. 
Due to space limitation, we relegate the full proof to Appendix~\ref{sec:proof}.
\end{proof}

Theorem~\ref{thm:convergence} indicates five sources of errors that affect the convergence rate: 
1) the optimization error depending on the initial guess $\x_0$; 
2) the statistical error due to stochastic gradient noise; 
3) channel noise error from the noisy OTA transmissions;
4) local update error from local update steps coupled with data heterogeneity; 
and 5) channel estimation error due to imperfect CSI. 
Compared to the convergence analysis with perfect CSI in\cite{yang22over}, the additional errors stem from the imperfect CSI at each client. 
The CSI estimation errors in each global iteration accumulate and also contribute to statistical error, which are coupled with bounded local gradient variance and data heterogeneity. 
%This also implies that impact of wireless fading channels, which result in signal distortion and aggregated misalignment. 
With perfect CSI, we can fully neutralize this effect. However, with imperfect CSI, we can only partially counter the fading effect.

Note that when we have perfect CSI, i.e., $h_t^i = \hh_t^i$, the accumulated channel estimation error will disappear and the statistical error will not be influenced by $\frac{h_t^i}{\hh_t^i}$. 
This matches our previous convergence analysis of Gaussian MAC. However, when we consider imperfect CSI with Gaussian estimation noise, the convergence upper bound will diverge. 
Yet, in practice, the channel estimation error is a small perturbation. 
Similar to \cite{zhu2020one}, we can use the Taylor expansion to yield the following relation: $\frac{h_t^i}{\hh_t^i} = \frac{1}{1 + \frac{\Delta_t^i}{h_t^i}}  = 1 - \frac{\Delta_t^i}{h_t^i} + \mc{O}( (\frac{\Delta_t^i}{h_t^i})^2)$. By ignoring the higher order terms,  we have the following result:

\begin{restatable}{corollary} {convergence_rate} \label{cor:convergence}
Let $|\Delta_t^i| \ll |h_t^i|, \forall t\in[T], \forall i \in [m]$, $h_{m} = \mathop{min}\nolimits_{t \in [T], i \in[m]}\{|h_t^i|\}$, the convergence rate of {\alg} is bounded. The statistical error and channel estimation error are bounded by:
\begin{multline}
        L \eta \sigma^2 \frac{1}{T} \sum_{t=0}^{T-1} \sum_{i=1}^{m} \alpha_i^2 \mb{E}_t  \bigg\| \frac{h_t^i}{\hh_t^i} \bigg\|^2  \leq L \eta \sigma^2 \sum_{i=1}^{m} \alpha_i^2 \left( 1 + \frac{\sigt_h^2}{h_{m}^2}\right), \nonumber \\
        2 m G^2  \frac{1}{T} \sum_{t=0}^{T-1} \sum_{i=1}^m (\alpha_i)^2 \mb{E}_t  \bigg\| 1 - \frac{h_t^i}{\hh_t^i} \bigg\|^2 \nonumber \leq 2 m G^2 \sum_{i=1}^m (\alpha_i)^2 \frac{\sigt_h^2}{h_{m}^2}.
\end{multline}

\end{restatable}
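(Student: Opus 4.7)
The plan is to use the first-order Taylor expansion (already stated in the paragraph preceding the corollary) to replace $h_t^i/\hh_t^i$ by $1 - \Delta_t^i/h_t^i$, then take expectation over the estimation noise using $\mb{E}_t[\Delta_t^i]=0$ and $\mb{E}_t[|\Delta_t^i|^2]=\sigt_h^2$, and finally apply the uniform lower bound $|h_t^i| \ge h_m$ to kill the $\frac{1}{T}\sum_{t=0}^{T-1}$ averaging. No machinery beyond the assumptions already in force is required.

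First I would substitute $\hh_t^i = h_t^i + \Delta_t^i$ into
\[
\frac{h_t^i}{\hh_t^i} = \frac{1}{1+\Delta_t^i/h_t^i} = 1 - \frac{\Delta_t^i}{h_t^i} + \mc{O}\bigl((\Delta_t^i/h_t^i)^2\bigr),
\]
dropping higher-order terms as in the text. For the channel-estimation-error term this gives $\bigl\|1 - h_t^i/\hh_t^i\bigr\|^2 \approx |\Delta_t^i|^2/|h_t^i|^2 \le |\Delta_t^i|^2/h_m^2$; applying $\mb{E}_t[\cdot]$ yields the time- and client-uniform upper bound $\sigt_h^2/h_m^2$, so that $\frac{1}{T}\sum_{t=0}^{T-1}$ collapses and the remaining $2mG^2\sum_i \alpha_i^2$ prefactor reproduces the second inequality of the corollary.

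For the statistical error I would expand
\[
\Bigl\|\frac{h_t^i}{\hh_t^i}\Bigr\|^2 \approx \Bigl\|1 - \frac{\Delta_t^i}{h_t^i}\Bigr\|^2 = 1 - 2\,\mathfrak{Re}\!\Bigl(\frac{\Delta_t^i}{h_t^i}\Bigr) + \Bigl|\frac{\Delta_t^i}{h_t^i}\Bigr|^2.
\]
Because $\Delta_t^i$ has zero mean and is independent of $h_t^i$, the cross term vanishes under $\mb{E}_t[\cdot]$, and the quadratic term is again bounded by $\sigt_h^2/h_m^2$, giving $\mb{E}_t\bigl\|h_t^i/\hh_t^i\bigr\|^2 \le 1 + \sigt_h^2/h_m^2$. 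Multiplying by $L\eta\sigma^2\alpha_i^2$, summing over $i$, and using this time-uniform bound to eliminate $\frac{1}{T}\sum_{t=0}^{T-1}$ yields the first inequality.

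The main obstacle is the informal truncation of the Taylor series. A clean way to discharge this rigorously is to avoid Taylor altogether via the exact identity $h_t^i/\hh_t^i - 1 = -\Delta_t^i/\hh_t^i$ combined with $|\hh_t^i| \ge |h_t^i| - |\Delta_t^i| \ge h_m(1-o(1))$ under $|\Delta_t^i| \ll |h_t^i|$, which recovers the $\sigt_h^2/h_m^2$ order up to a multiplicative constant arbitrarily close to one. A secondary subtlety is the vanishing of the cross term, which requires $\Delta_t^i$ to be independent of the instantaneous channel $h_t^i$; this independence should be read as an implicit assumption on the channel estimator and could be stated explicitly in a formal version of the corollary.
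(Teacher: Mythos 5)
Your proposal is correct and follows essentially the same route as the paper, which derives the corollary directly from the first-order Taylor expansion $\frac{h_t^i}{\hh_t^i} = 1 - \frac{\Delta_t^i}{h_t^i} + \mc{O}((\Delta_t^i/h_t^i)^2)$ stated just before the corollary, drops the higher-order terms, uses the zero mean and variance $\sigt_h^2$ of $\Delta_t^i$ to kill the cross term and bound the quadratic term, and replaces $|h_t^i|$ by the uniform minimum $h_m$ to collapse the time average. Your added remarks on discharging the truncation via the exact identity $h_t^i/\hh_t^i - 1 = -\Delta_t^i/\hh_t^i$ and on the implicit independence of $\Delta_t^i$ from $h_t^i$ go slightly beyond what the paper makes explicit, but they do not change the argument.
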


Finally, we note that we can further extend our results to fast fading channels. When channel states change quickly over time, it is hard to estimate the instantaneous channel gain in each coherent channel duration. 
However, we can obtain the distribution of fading coefficient. 
By replacing the estimated CSI with the expectation of CSI, the aggregated signal remains the same and the convergence results in Theorem~\ref{thm:convergence} still hold.

% !TEX root = main.tex

\section{Numerical Results} \label{sec: exp}

To verify the effectiveness and robustness of \alg, we conduct numerical experiments by using logistic regression for classification tasks on the MNIST dataset~\cite{lecun1998gradient}. 
The experimental setup follows from~\cite{ yang22over}, where data is equally distributed to $m=10$ clients based on labels. 
We use parameter $p$ to represent data heterogeneity level, where $p=10$ means i.i.d. and the rest values are non-i.i.d cases. 
%As for the wireless communication system, 
We consider standard i.i.d. Rayleigh fading channels, i.e. $h_t^i \sim \mc{CN}(0,1), \forall t \in [T], \forall i \in [M]$. 
We simulate channel estimation error as a complex Gaussian variable, i.e., $\Delta_t^i \sim \mc{CN}(0,0.1)$, $\forall t \in [T]$, $\forall i \in [M]$. 
The maximum SNR 
%transmit power $P_t^i$ to the channel noise $\sigma_c^2$ ratio (SNR) 
is set to \unit[$-1$]{dB}, \unit[$10$]{dB}, \unit[$20$]{dB}.

\begin{table}[tb]
\caption{Logistic regression test accuracy (\%) for \alg compared with COTAF and FedAvg on the MNIST dataset with different non-i.i.d. index $p$ for SNR=$10$. `` / '' means that the algorithm does not converge.}
\label{table:test_accuracy SNR=10}
\centering
\begin{tabular}{|c|c|ccc|}
\hline
\multicolumn{1}{|c|}{\multirow{2}{*}{\textbf{Non-IID Level}}} & \multicolumn{1}{c|}{\multirow{2}{*}{\textbf{Algorithm}}} & \multicolumn{3}{c|}{\textbf{Communication Model}}                                        \\ \cline{3-5} 
\multicolumn{1}{|c|}{}                               & \multicolumn{1}{c|}{}                           & \multicolumn{1}{c|}{Imperfect}  & \multicolumn{1}{c|}{Perfect}  & No Fading  \\ \hline
\multirow{3}{*}{$p=1$}                                 & {\cellcolor[gray]{.9}{\bf \alg}}                                     & \multicolumn{1}{c|}{\cellcolor[gray]{.9} {\bf 85.87}} & \multicolumn{1}{c|}{\cellcolor[gray]{.9} {\bf 87.46}} & \cellcolor[gray]{.9} {\bf 89.08} \\ \cline{2-5} 
                                                     & COTAF                                           & \multicolumn{1}{c|}{/} & \multicolumn{1}{c|}{63.96} & 65.54 \\ \cline{2-5} 
                                                     & FedAvg                                          & \multicolumn{1}{c|}{/} & \multicolumn{1}{c|}{69.64} & 68.08 \\ \hline
\multirow{3}{*}{$p=2$}                                 & {\cellcolor[gray]{.9}{\bf \alg}}                                     & \multicolumn{1}{c|}{\cellcolor[gray]{.9} {\bf 86.45}} & \multicolumn{1}{c|}{\cellcolor[gray]{.9} {\bf 89.07}} & \cellcolor[gray]{.9} {\bf 89.58} \\ \cline{2-5} 
                                                     & COTAF                                           & \multicolumn{1}{c|}{/} & \multicolumn{1}{c|}{77.47} & 78.80 \\ \cline{2-5} 
                                                     & FedAvg                                          & \multicolumn{1}{c|}{51.96} & \multicolumn{1}{c|}{79.42} & 78.03 \\ \hline
\multirow{3}{*}{$p=5$}                                 & {\cellcolor[gray]{.9}{\bf \alg}}                                     & \multicolumn{1}{c|}{\cellcolor[gray]{.9} {\bf 89.27}} & \multicolumn{1}{c|}{\cellcolor[gray]{.9} {\bf 91.07}} & \cellcolor[gray]{.9}{\bf 90.64} \\ \cline{2-5} 
                                                     & COTAF                                           & \multicolumn{1}{c|}{/} & \multicolumn{1}{c|}{85.96} & 86.52 \\ \cline{2-5} 
                                                     & FedAvg                                          & \multicolumn{1}{c|}{59.49} & \multicolumn{1}{c|}{82.19} & 82.84 \\ \hline
\multirow{3}{*}{$p=10$}                                & {\cellcolor[gray]{.9}{\bf \alg}}                                     & \multicolumn{1}{c|}{\cellcolor[gray]{.9}{\bf90.06}} & \multicolumn{1}{c|}{\cellcolor[gray]{.9}{\bf 91.19}} & \cellcolor[gray]{.9}{\bf{90.75}} \\ \cline{2-5} 
                                                     & COTAF                                           & \multicolumn{1}{c|}{/} & \multicolumn{1}{c|}{91.04} & 91.08 \\ \cline{2-5} 
                                                     & FedAvg                                          & \multicolumn{1}{c|}{61.79} & \multicolumn{1}{c|}{84.85} & 84.94 \\ \hline
\end{tabular}
\vspace{-.1in}
\end{table}

\begin{table}[tb]
\caption{Logistic regression test Accuracy (\%) for \alg compared with COTAF and FedAvg on the MNIST dataset with different signal-to-noise ratios when Non-IID index $p=2$. `` / '' means that the algorithm does not converge.}
\label{table:test_accuracy p=2}
\centering
\begin{tabular}{|c|c|ccc|}
\hline
\multicolumn{1}{|c|}{\multirow{2}{*}{\textbf{SNR}}} & \multicolumn{1}{c|}{\multirow{2}{*}{\textbf{Algorithm}}} & \multicolumn{3}{c|}{\textbf{Communication Model}}                                        \\ \cline{3-5} 
\multicolumn{1}{|c|}{}                               & \multicolumn{1}{c|}{}                           & \multicolumn{1}{c|}{Imperfect}  & \multicolumn{1}{c|}{Perfect}  & No Fading  \\ \hline
\multirow{3}{*}{$\mathsf{SNR}=-1$}                                 & {\cellcolor[gray]{.9}{\bf \alg}}                                     & \multicolumn{1}{c|}{\cellcolor[gray]{.9} {\bf 79.54}} & \multicolumn{1}{c|}{\cellcolor[gray]{.9} {\bf 82.88}} & \cellcolor[gray]{.9} {\bf 81.89} \\ \cline{2-5} 
                                                     & COTAF                                           & \multicolumn{1}{c|}{/} & \multicolumn{1}{c|}{61.33} & 63.59 \\ \cline{2-5} 
                                                     & FedAvg                                          & \multicolumn{1}{c|}{/} & \multicolumn{1}{c|}{73.17} & 71.55 \\ \hline
\multirow{3}{*}{$\mathsf{SNR}=10$}                                 & {\cellcolor[gray]{.9}{\bf \alg}}                                     & \multicolumn{1}{c|}{\cellcolor[gray]{.9} {\bf 86.45}} & \multicolumn{1}{c|}{\cellcolor[gray]{.9} {\bf 89.07}} & \cellcolor[gray]{.9} {\bf 89.58} \\ \cline{2-5} 
                                                     & COTAF                                           & \multicolumn{1}{c|}{/} & \multicolumn{1}{c|}{77.47} & 78.80 \\ \cline{2-5} 
                                                     & FedAvg                                          & \multicolumn{1}{c|}{51.96} & \multicolumn{1}{c|}{79.42} & 78.03 \\ \hline
\multirow{3}{*}{$\mathsf{SNR}=20$}                                 & {\cellcolor[gray]{.9}{\bf \alg}}                                     & \multicolumn{1}{c|}{\cellcolor[gray]{.9} {\bf 87.10}} & \multicolumn{1}{c|}{\cellcolor[gray]{.9} {\bf 90.17}} & \cellcolor[gray]{.9}{\bf 90.43} \\ \cline{2-5} 
                                                     & COTAF                                           & \multicolumn{1}{c|}{/} & \multicolumn{1}{c|}{86.10} & 86.57 \\ \cline{2-5} 
                                                     & FedAvg                                          & \multicolumn{1}{c|}{63.36} & \multicolumn{1}{c|}{79.32} & 79.86 \\ \hline
\end{tabular}
\vspace{-.1in}
\end{table}

 In this experiment, we only focus on imperfect CSIT. 
 From table~\ref{tab:related_work}, OBDA is a digital method, CA-DSGD considers compression, EADDS has a computation energy constraint.
 All of these methods have different perspectives that we do not consider. 
 For a fair comparison, we compare {\alg} with COTAF~\cite{sery2021over} and FedAvg~\cite{mcmahan2017communication} in different communication scenarios. 
 In an imperfect CSI case, we let each client use an estimated channel to inverse the fading effect in both COTAF and FedAvg. 
 Specifically, the transmitted signal from client $i$ is scaled by $\frac{1}{\hh_t^i}$.
 %We assume that there is no deep fading so all the clients can satisfy power constraints and participate in every communication round.

From the test accuracy results in Table~\ref{table:test_accuracy SNR=10} and~\ref{table:test_accuracy p=2}, we observe that the performance of perfect CSI is the same as no fading case, which implies that the increased local steps due to the inverse fading channel gain do not introduce bias to the global model. 
In the imperfect CSI scenario, our \alg algorithm can still achieve an acceptable test accuracy, albeit worse than the perfect CSI due to the CSI estimation error.
Note that \alg outperforms COTAF and FedAvg significantly. 
COTAF and FedAvg fail to converge, suggesting that the impacts of CSI estimation error could be significant.
%is non-negligible, even catastrophic. 
Our \alg algorithm allows clients to choose local steps dynamically, where the joint computation and communication design can mitigate the effect of CSI estimation error and maintain system robustness.

% !TEX root = main.tex

\section{Conclusion} \label{sec: conclusion}

In this paper, we have proposed a new adaptive OTA-FL algorithm called \alg. The proposed algorithm adapts to wireless channel fading with channel inversion at each client utilizing estimated CSI. 
We have considered the practical scenario where the clients only have imperfect CSI. 
We have studied the convergence performance of \alg with imperfect CSI and quantified the impact of CSI estimation error. 
We have demonstrated the effectiveness and robustness of the joint communication and computation design under data and system heterogeneity.

% \begin{table}[htbp]
% \caption{Table Type Styles}
% \begin{center}
% \begin{tabular}{|c|c|c|c|}
% \hline
% \textbf{Table}&\multicolumn{3}{|c|}{\textbf{Table Column Head}} \\
% \cline{2-4} 
% \textbf{Head} & \textbf{\textit{Table column subhead}}& \textbf{\textit{Subhead}}& \textbf{\textit{Subhead}} \\
% \hline
% copy& More table copy$^{\mathrm{a}}$& &  \\
% \hline
% \multicolumn{4}{l}{$^{\mathrm{a}}$Sample of a Table footnote.}
% \end{tabular}
% \label{tab1}
% \end{center}
% \end{table}

% \begin{figure}[htbp]
% \centerline{\includegraphics{fig1.png}}
% \caption{Example of a figure caption.}
% \label{fig}
% \end{figure}

%\newpage

\bibliographystyle{IEEEtran}{}
\bibliography{BIB/FederatedLearning, BIB/SGD, BIB/Yang, BIB/OTAFL, BIB/Experiments, BIB/ImperfectCSI}

\onecolumn
% !TEX root = main.tex
\allowdisplaybreaks

\section{Proof}
\label{sec:proof}
%\convergence*
\textbf{Theorem 1} (Convergence Rate of \algns)
Let $\{ \x_t \}$ be the global model parameter.
    Under Assumptions~\ref{a_smooth}-~\ref{a_bounded} and a constant learning rate $\eta_t = \eta \leq \frac{1}{L}, \forall t \in [T]$, it holds that:
    \begin{multline}
        \min_{t \in [T]} \mb{E} \| \nabla F(\x_t) \|^2 \leq \underbrace{\frac{2 \left(F(\x_0) - F(\x_{*}) \right)}{T \eta}}_{\mathrm{optimization \, error}} + \underbrace{\frac{L \sigma_c^2}{ \eta \beta^2}}_{\substack{\mathrm{channel \,
        noise} \\ \mathrm{error}}}  
        + \underbrace{2 m L^2 \eta^2 G^2 \sum_{i=1}^m (\alpha_i)^2 \left(\tau_i\right)^2}_{\mathrm{local \, update \, error}} + \underbrace{  L \eta \sigma^2 \frac{1}{T} \sum_{t=0}^{T-1} \sum_{i=1}^{m} \alpha_i^2 \mb{E}_t  \bigg\| \frac{h_t^i}{\hh_t^i} \bigg\|^2}_{\mathrm{statistical \, error}} \nonumber \\
        + \underbrace{2 m G^2  \frac{1}{T} \sum_{t=0}^{T-1} \sum_{i=1}^m (\alpha_i)^2 \mb{E}_t  \bigg\| 1 - \frac{h_t^i}{\hh_t^i} \bigg\|^2}_{\mathrm{channel \, estimation \, error}}  ,
    \end{multline}
    where $\left(\tau_i\right)^2 = \frac{\sum_{t=0}^{T-1} \left( \tau^i_t \right)^2}{T}$ and $\frac{1}{\bar{\beta}^2} = \frac{1}{T} \sum_{t=0}^{T-1} \frac{1}{\beta_t^2}$.

\begin{proof}

\begin{align}
    \x_{t+1} - \x_{t} &= \sum_{i=1}^{m} \frac{\beta_t^i}{\beta_t} h_t^i \left(\x^i_{t, \tau^i_t} - \x^i_{t, 0}\right) + \tilde{\w}_t \\
    &= \sum_{i=1}^{m} \frac{\alpha_i}{\tau^i_t} \frac{h_t^i}{\hh_t^i} \left(\x^i_{t, \tau^i_t} - \x^i_{t, 0}\right) + \tilde{\w}_t \\
    &= - \sum_{i=1}^{m} \frac{\alpha_i}{\tau^i_t}\frac{h_t^i}{\hh_t^i} \eta_t \sum_{k=0}^{\tau^i_t-1} \left(\nabla F_i(\x^i_{t, k}, \xi^i_{t, k})\right) + \tilde{\w}_t
\end{align}

We first take expectation conditioned on $\x_t$.
There exists three sources of randomness: stochastic gradient noise, channel noise, and imperfect CSI estimation noise; but we assume they are independent.
According to assumption 1,
\begin{align}
    &\mb{E}_t [F(\x_{t+1})] - F(\x_t) \leq \left< \nabla F(\x_t), \mb{E}_t \left[\x_{t+1} - \x_t \right] \right> + \frac{L}{2} \mb{E}_t \left[\| \x_{t+1} - \x_t \|^2 \right] \\
    &= - \left< \nabla F(\x_t), \mb{E}_t \left[\sum_{i=1}^{m} \frac{\alpha_i}{\tau^i_t} \frac{h_t^i}{\hh_t^i} \eta_t \sum_{k=0}^{\tau^i_t-1} \left(\nabla F_i(\x^i_{t, k})\right)\right] \right> + \frac{L}{2} \mb{E}_t \left[\bigg\| - \sum_{i=1}^{m} \eta_t\frac{\alpha_i }{\tau^i_t}\frac{h_t^i}{\hh_t^i} \sum_{k=0}^{\tau^i_t-1} \left(\nabla F_i(\x^i_{t, k}, \xi^i_{t, k})\right)  + \tilde{\w}_t \bigg\|^2 \right] \\
    &= - \eta_t \| \nabla F(\x_t) \|^2 + \eta_t \left< \nabla F(\x_t), \nabla F(\x_t) - \mb{E}_t \left[\sum_{i=1}^{m} \frac{\alpha_i}{\tau^i_t}\frac{h_t^i}{\hh_t^i} \sum_{k=0}^{\tau^i_t-1} \left(\nabla F_i(\x^i_{t, k})\right) \right] \right> \\
    &+ \frac{L}{2} \mb{E}_t \left[\bigg\| - \sum_{i=1}^{m} \frac{\alpha_i \eta_t}{\tau^i_t} \frac{h_t^i}{\hh_t^i} \sum_{k=0}^{\tau^i_t-1} \left(\nabla F_i(\x^i_{t, k}, \xi^i_{t, k})\right) + \tilde{\w}_t \bigg\|^2 \right] \\
    &=- \frac{1}{2} \eta_t \| \nabla F(\x_t) \|^2 - \frac{1}{2} \eta_t \mb{E}_t \left[\bigg\| \sum_{i=1}^{m} \frac{\alpha_i}{\tau^i_t} \frac{h_t^i}{\hh_t^i} \sum_{k=0}^{\tau^i_t-1} \left(\nabla F_i(\x^i_{t, k})\right) \bigg\|^2 \right] + \frac{1}{2} \eta_t \mb{E}_t \bigg\| \nabla F(\x_t) - \sum_{i=1}^{m} \frac{\alpha_i}{\tau^i_t} \frac{h_t^i}{\hh_t^i}\sum_{k=0}^{\tau^i_t-1} \left(\nabla F_i(\x^i_{t, k})\right) \bigg\|^2 \\
    &+ \frac{L}{2} \mb{E}_t \left[\bigg\| - \sum_{i=1}^{m} \frac{\alpha_i \eta_t}{\tau^i_t}\frac{h_t^i}{\hh_t^i} \sum_{k=0}^{\tau^i_t-1} \left(\nabla F_i(\x^i_{t, k}, \xi^i_{t, k})\right) + \tilde{\w}_t \bigg\|^2 \right] \\
    &=- \frac{1}{2} \eta_t \| \nabla F(\x_t) \|^2 - \frac{1}{2} \eta_t \mb{E}_t \left[\bigg\| \sum_{i=1}^{m} \frac{\alpha_i}{\tau^i_t} \frac{h_t^i}{\hh_t^i} \sum_{k=0}^{\tau^i_t-1} \left(\nabla F_i(\x^i_{t, k})\right) \bigg\|^2 \right] + \frac{1}{2} \eta_t \mb{E}_t \bigg\| \sum_{i=1}^{m} \frac{\alpha_i}{\tau^i_t} \sum_{k=0}^{\tau^i_t-1} \left(\nabla F_i(\x_t) - \frac{h_t^i}{\hh_t^i} \nabla F_i(\x^i_{t, k})\right) \bigg\|^2 \\
    &+ \frac{L \eta_t^2}{2} \mb{E}_t \left[\bigg\| \sum_{i=1}^{m} \frac{\alpha_i}{\tau^i_t} \frac{h_t^i}{\hh_t^i}\sum_{k=0}^{\tau^i_t-1} \left(\nabla F_i(\x^i_{t, k}, \xi^i_{t, k})\right) \bigg\|^2 \right] + \frac{L \sigma_c^2}{2\beta_t^2} \\
    &\leq - \frac{1}{2} \eta_t \| \nabla F(\x_t) \|^2 + \frac{1}{2} \eta_t \mb{E}_t \bigg\| \sum_{i=1}^{m} \frac{\alpha_i}{\tau^i_t} \sum_{k=0}^{\tau^i_t-1} \left(\nabla F_i(\x_t) - \frac{h_t^i}{\hh_t^i}\nabla F_i(\x^i_{t, k})\right) \bigg\|^2 \\
    &+ \frac{L \eta_t^2}{2} \mb{E}_t \left[\bigg\| \sum_{i=1}^{m} \frac{\alpha_i}{\tau^i_t}\frac{h_t^i}{\hh_t^i} \sum_{k=0}^{\tau^i_t-1} \left(\nabla F_i(\x^i_{t, k}, \xi^i_{t, k})\right) - \sum_{i=1}^{m} \frac{\alpha_i}{\tau^i_t} \frac{h_t^i}{\hh_t^i}\sum_{k=0}^{\tau^i_t-1} \left(\nabla F_i(\x^i_{t, k})\right) \bigg\|^2 \right] + \frac{L \sigma_c^2}{2\beta_t^2} \\
    &\leq - \frac{1}{2} \eta_t \| \nabla F(\x_t) \|^2 + \frac{1}{2} \eta_t \mb{E}_t \bigg\| \sum_{i=1}^{m} \frac{\alpha_i}{\tau^i_t} \sum_{k=0}^{\tau^i_t-1} \left(\nabla F_i(\x_t) - \frac{h_t^i}{\hh_t^i} \nabla F_i(\x^i_{t, k})\right) \bigg\|^2 \\
    &+ \frac{L \eta_t^2}{2} \sum_{i=1}^{m} \mb{E}_t \left[\bigg\| \frac{\alpha_i}{\tau^i_t}\frac{h_t^i}{\hh_t^i}  \sum_{k=0}^{\tau^i_t-1} \left(\nabla F_i(\x^i_{t, k}, \xi^i_{t, k}) - \nabla F_i(\x^i_{t, k})\right) \bigg\|^2 \right] + \frac{L \sigma_c^2}{2\beta_t^2} \label{ineq: Lsmooth} 
\end{align}

The first inequality holds if $\eta_t \leq \frac{1}{L}$. Because channel estimation is independent of learning,
\begin{align}
    &\frac{ L \eta_t^2}{2} \sum_{i=1}^{m} \mb{E}_t \left[\bigg\| \frac{\alpha_i}{\tau^i_t}\frac{h_t^i}{\hh_t^i}  \sum_{k=0}^{\tau^i_t-1} \left(\nabla F_i(\x^i_{t, k}, \xi^i_{t, k}) - \nabla F_i(\x^i_{t, k})\right) \bigg\|^2 \right] \\
    &\leq
    \frac{ L \eta_t^2}{2}  \sum_{i=1}^m \frac{(\alpha_i)^2}{(\tau^i_t)^2} \mb{E}_t  \bigg\| \frac{h_t^i}{\hh_t^i} \bigg\|^2\mb{E}_t \bigg\|  \sum_{k=0}^{\tau^i_t-1} \left(\nabla F_i(\x^i_{t, k}, \xi^i_{t, k}) - \nabla F_i(\x^i_{t, k})\right) \bigg\|^2 \\
    &\leq \frac{ L \eta_t^2}{2} \sum_{i=1}^m \frac{(\alpha_i)^2}{\tau^i_t} \mb{E}_t  \bigg\| \frac{h_t^i}{\hh_t^i} \bigg\|^2 \sum_{k=0}^{\tau^i_t-1} \mb{E}_t \bigg\| \left(\nabla F_i(\x^i_{t, k}, \xi^i_{t, k}) - \nabla F_i(\x^i_{t, k})\right) \bigg\|^2 \\
    &\leq \frac{ L \eta_t^2}{2} \sum_{i=1}^m (\alpha_i)^2 \mb{E}_t  \bigg\| \frac{h_t^i}{\hh_t^i} \bigg\|^2 \sigma^2
     \label{ineq: boundedvar}
\end{align}

Apply Jensen's Inequality to assumption 2 and 3, we have $\bigg\| \nabla F_i(\x^i_{t, k}) \bigg\|^2 \leq G^2$. 

\begin{align}
    &\frac{1}{2} \eta_t \mb{E}_t \bigg\| \sum_{i=1}^{m} \frac{\alpha_i}{\tau^i_t} \sum_{k=0}^{\tau^i_t-1} \left(\nabla F_i(\x_t) - \frac{h_t^i}{\hh_t^i}\nabla F_i(\x^i_{t, k})\right) \bigg\|^2 \leq
    \frac{1}{2} \eta_t m \sum_{i=1}^m \frac{(\alpha_i)^2}{(\tau^i_t)^2} \mb{E}_t \bigg\|  \sum_{k=0}^{\tau^i_t-1} \left(\nabla F_i(\x_t) - \frac{h_t^i}{\hh_t^i} \nabla F_i(\x^i_{t, k})\right) \bigg\|^2 \\
    &\leq \frac{1}{2} \eta_t m \sum_{i=1}^m \frac{(\alpha_i)^2}{\tau^i_t} \sum_{k=0}^{\tau^i_t-1} \mb{E}_t \bigg\| \left(\nabla F_i(\x_t) - \frac{h_t^i}{\hh_t^i}\nabla F_i(\x^i_{t, k})\right) \bigg\|^2 \\
    &= \frac{1}{2} \eta_t m \sum_{i=1}^m \frac{(\alpha_i)^2}{\tau^i_t} \sum_{k=0}^{\tau^i_t-1} \mb{E}_t \bigg\| \left(\nabla F_i(\x_t) -\nabla F_i(\x^i_{t, k}) + \nabla F_i(\x^i_{t, k}) - \frac{h_t^i}{\hh_t^i}\nabla F_i(\x^i_{t, k})\right) \bigg\|^2 \\
    &\leq \frac{1}{2} \eta_t m \sum_{i=1}^m \frac{(\alpha_i)^2}{\tau^i_t} \sum_{k=0}^{\tau^i_t-1} \mb{E}_t  \left(2\bigg\| \nabla F_i(\x_t) -\nabla F_i(\x^i_{t, k}) \bigg\|^2 + 2 \bigg\| \nabla F_i(\x^i_{t, k}) - \frac{h_t^i}{\hh_t^i}\nabla F_i(\x^i_{t, k}) \bigg\|^2 \right) \\
    &\leq \eta_t m \sum_{i=1}^m \frac{(\alpha_i)^2}{\tau^i_t} \sum_{k=0}^{\tau^i_t-1} \left( L^2 \mb{E}_t \bigg\|\x_t - \x^i_{t, k} \bigg\|^2 + \mb{E}_t \bigg\|1 - \frac{h_t^i}{\hh_t^i}\bigg\|^2 \mb{E}_t \bigg\| \nabla F_i(\x^i_{t, k}) \bigg\|^2 \right)\\
    &\leq \eta_t m \sum_{i=1}^m \frac{(\alpha_i)^2}{\tau^i_t} \sum_{k=0}^{\tau^i_t-1}\left( \eta_t^2 L^2 \mb{E}_t \bigg\|\sum_{j=0}^{k} \nabla F_i(\x^i_{t, j}, \xi^i_{t, j})  \bigg\|^2 + \mb{E}_t \bigg\| 1 - \frac{h_t^i}{\hh_t^i}\bigg\|^2 \mb{E}_t \bigg\| \nabla F_i(\x^i_{t, k}) \bigg\|^2 \right) \\
    &\leq \eta_t m \sum_{i=1}^m \frac{(\alpha_i)^2}{\tau^i_t} \sum_{k=0}^{\tau^i_t-1} \left( \eta_t^2 L^2 k^2 G^2  + \mb{E}_t \bigg\|1- \frac{h_t^i}{\hh_t^i}\bigg\|^2 G^2 \right) \\
    &\leq \eta_t^3 m L^2 \sum_{i=1}^m (\alpha_i)^2 (\tau^i_t)^2 G^2 + \eta_t m \sum_{i=1}^m (\alpha_i)^2 \mb{E}_t \bigg\|1- \frac{h_t^i}{\hh_t^i}\bigg\|^2 G^2
    \label{ineq: variance}
\end{align}

Plugging inequality~\eqref{ineq: boundedvar} and~\eqref{ineq: variance} into \eqref{ineq: Lsmooth}, we have 
\begin{align}
    &\mb{E}_t [F(\x_{t+1})] - F(\x_t) \leq \left< \nabla F(\x_t), \mb{E}_t \left[\x_{t+1} - \x_t \right] \right> + \frac{L}{2} \mb{E}_t \left[\| \x_{t+1} - \x_t \|^2 \right] \\
    &\leq - \frac{1}{2} \eta_t \| \nabla F(\x_t) \|^2 + \eta_t^3 m L^2 \sum_{i=1}^m (\alpha_i)^2 (\tau^i_t)^2 G^2 + \eta_t m \sum_{i=1}^m (\alpha_i)^2 \mb{E}_t \bigg\|1- \frac{h_t^i}{\hh_t^i}\bigg\|^2 G^2 + \frac{L \eta_t^2}{2} \sum_{i=1}^m (\alpha_i)^2 \mb{E}_t  \bigg\| \frac{h_t^i}{\hh_t^i} \bigg\|^2 \sigma^2 + \frac{L \sigma_c^2}{2\beta_t^2}
\end{align}

Rearranging and telescoping:
\begin{align}
    \frac{1}{T} \sum_{t=0}^{T-1} \eta_t \mb{E}_t \| \nabla F(\x_t) \|^2 &\leq \frac{2 \left(F(\x_0) - F(\x_T) \right)}{T} +  2 m L^2 G^2 \frac{1}{T} \sum_{t=0}^{T-1} \eta_t^3 \sum_{i=1}^m (\alpha_i)^2 \left( \tau^i_t \right)^2 + L \sigma_c^2 \frac{1}{T} \sum_{t=0}^{T-1} \frac{1}{\beta_t^2}  \\ 
    & + 2 m G^2 \frac{1}{T} \sum_{t=0}^{T-1} \eta_t \sum_{i=1}^m (\alpha_i)^2 \mb{E}_t  \bigg\| 1 - \frac{h_t^i}{\hh_t^i} \bigg\|^2  +  L \frac{1}{T} \sum_{t=0}^{T-1} \eta_t^2 \sum_{i=1}^m (\alpha_i)^2 \mb{E}_t  \bigg\| \frac{h_t^i}{\hh_t^i} \bigg\|^2 \sigma^2
\end{align}

Let $\eta_t = \eta$ be constant learning rate, $\left(\tau_i\right)^2 = \frac{\sum_{t=0}^{T-1} \left( \tau^i_t \right)^2}{T}$ then we have:
\begin{align}
    \frac{1}{T} \sum_{t=0}^{T-1} \mb{E} \| \nabla F(\x_t) \|^2 &\leq \frac{2 \left(F(\x_0) - F(\x_T) \right)}{T \eta} + 2 m L^2 \eta^2 G^2 \sum_{i=1}^m (\alpha_i)^2 \left(\tau_i\right)^2 + \frac{L \sigma_c^2}{\eta \beta^2} + 2 m G^2  \frac{1}{T} \sum_{t=0}^{T-1} \sum_{i=1}^m (\alpha_i)^2 \mb{E}_t  \bigg\| 1 - \frac{h_t^i}{\hh_t^i} \bigg\|^2 +\\
    & L \eta \frac{1}{T} \sum_{t=0}^{T-1} \sum_{i=1}^m (\alpha_i)^2 \mb{E}_t  \bigg\| \frac{h_t^i}{\hh_t^i} \bigg\|^2 \sigma^2,
\end{align}
where $\frac{1}{\bar{\beta}^2} = \frac{1}{T} \sum_{t=0}^{T-1} \frac{1}{\beta_t^2}$.
\end{proof}

\end{document}